\newcommand{\tinyspace}{\mspace{1mu}}
\newcommand{\abs}[1]{\left\lvert\tinyspace #1 \tinyspace\right\rvert}
\newcommand{\setft}[1]{\mathrm{#1}}
\newcommand{\lin}[1]{\setft{L}\left(#1\right)}
\newcommand{\density}[1]{\setft{D}\left(#1\right)}
\newcommand{\trans}[1]{\setft{T}\left(#1\right)}
\def\complex{\mathbb{C}}
\def\natural{\mathbb{N}}
\def\I{\mathbb{1}}
\newenvironment{mylist}[1]{\begin{list}{}{
    \setlength{\leftmargin}{#1}
    \setlength{\rightmargin}{0mm}
    \setlength{\labelsep}{2mm}
    \setlength{\labelwidth}{8mm}
    \setlength{\itemsep}{0mm}}}
    {\end{list}}
\def\ot{\otimes}
\newcommand{\inner}[2]{\langle #1 , #2\rangle}
\newcommand{\out}[2]{| #1\rangle\langle #2 |}
\newcommand{\Innerm}[3]{\left\langle #1 \left| #2 \right| #3 \right\rangle}
\newcommand{\defeq}{\stackrel{\smash{\textnormal{\tiny def}}}{=}}
\newcommand{\Pa}[1]{\left(#1\right)}
\newcommand{\Br}[1]{\left[#1\right]}
\newcommand{\set}[1]{\{#1\}}
\newcommand{\Set}[1]{\left\{#1\right\}}
\newcommand{\ket}[1]{|#1\rangle}
\newcommand{\Ket}[1]{\left|#1\right\rangle}
\DeclareMathOperator{\trace}{Tr}
\newcommand{\Ptr}[2]{\trace_{#1}\Pa{#2}}
\newcommand{\Tr}[1]{\Ptr{}{#1}}
\def\cH{\mathcal{H}}
\def\rM{\mathrm{M}}
\def\rS{\mathrm{S}}
\newtheorem{thrm}{Theorem}[section]
\newtheorem{lem}[thrm]{Lemma}
\newtheorem{prop}[thrm]{Proposition}
\newtheorem{cor}[thrm]{Corollary}
\theoremstyle{definition}
\newtheorem{remark}[thrm]{Remark}
\numberwithin{equation}{section}
\newcounter{questionnumber}
\begin{document}

\title{\Large\bf Unifying Treatment of Discord via Relative Entropy}

\author{Lin Zhang\footnote{E-mail: godyalin@163.com}\\[1mm]
  {\it\small Institute of Mathematics, Hangzhou Dianzi University, Hangzhou 310018, P.R.~China}\\[1mm]
  Shao-Ming Fei\footnote{E-mail: feishm@mail.cnu.edu.cn}\\[1mm]
{\it\small School of Mathematical Sciences, Capital Normal
University, Beijing 100048, P.R.~China}\\
Jun Zhu\\
{\it\small Institute of Mathematics, Hangzhou Dianzi
University, Hangzhou 310018, P.R.~China}}

\date{}
\maketitle \mbox{}\hrule\mbox\\
\begin{abstract}

A new form of zero-discord state via Petz's monotonicity condition
on relative entropy with equality has been derived systematically. A
generalization of symmetric zero-discord states is presented and the
related physical implications are discussed.

\end{abstract}
\mbox{}\hrule\mbox\\

\section{Introduction}

Relative entropy are powerful tools in quantum information theory
\cite{Ohya}. It has a monotonicity property under a certain class of
quantum channels and the condition of equality is an interesting and
important subject. It is Petz who first studied the equality
condition of monotonicity of relative entropy \cite{Petz1,Petz2}.
Later Ruskai obtained similar result in terms of another elegant
approach \cite{Ruskai}. The most general equality condition along
with this line are recently reviewed in \cite{Hiai}.

In this note we will make use of the most general equality condition
for relative entropy to find the specific form of states which
satisfy the zero-discord condition (see details below).

Let $\cH$ denote an $N$-dimensional complex Hilbert space. A
\emph{state} $\rho$ on $\cH$ is a positive semi-definite operator of
trace one. We denote $\density{\cH}$ the set of all the density
matrices acting on $\cH$. If $\rho = \sum_k\lambda_k\out{u_k}{u_k}$
is the spectral decomposition of $\rho$, with $\lambda_k$ and
$|u_k\rangle$ the eigenvalues and eigenvectors respectively, then
the \emph{support} of $\rho$ is defined by
$$
\mathrm{supp}(\rho) \defeq \mathrm{span}\set{\ket{u_k} :
\lambda_k>0},
$$
and the \emph{generalized inverse} $\rho^{-1}$ of $\rho$ is defined
by
$$
\rho^{-1} = \sum_{k:\lambda_k>0}\lambda^{-1}_k\out{u_k}{u_k}.
$$
The \emph{von Neumann entropy} $\rS(\rho)$ of $\rho$ is defined by
$$
\rS(\rho) \defeq - \Tr{\rho\log\rho},
$$
which quantifies information encoded in the quantum state $\rho$. If
$\sigma$ is also a quantum state on $\cH$, then the \emph{relative
entropy} \cite{Ohya} between $\rho$ and $\sigma$ is defined by
$$
\rS(\rho||\sigma) \defeq \left\{\begin{array}{cl}
                             \Tr{\rho(\log\rho -
\log\sigma)}, & \text{if}\ \mathrm{supp}(\rho) \subseteq
\mathrm{supp}(\sigma), \\
                             +\infty, & \text{otherwise}.
                           \end{array}
\right.
$$

Let $\lin{\cH}$ be the set of all linear operators on $\cH$. If $X,
Y \in \lin{\cH}$, then $\inner{X}{Y} = \Tr{X^{\dagger}Y}$ defines
the \emph{Hilbert-Schmidt inner product} on $\lin{\cH}$. Let
$\trans{\cH}$ denote the set of all linear super-operators from
$\lin{\cH}$ to itself. $\Lambda\in \trans{\cH}$ is said to be a
\emph{completely positive super-operator} if for each $k \in
\natural$,
$$
\Lambda\ot \I_{\rM_{k}(\complex)}: \lin{\cH} \ot \rM_{k}(\complex)
\to \lin{\cH}\ot \rM_{k}(\complex)
$$
is positive, where $\rM_{k}(\complex)$ is the set of all $k\times k$
complex matrices. It follows from Choi's theorem \cite{Choi} that
every completely positive super-operator $\Lambda$ has a Kraus
representation
$$
\Lambda = \sum_{\mu}\mathrm{Ad}_{M_{\mu}},
$$
that is, for every $X\in \lin{\cH}$,  $\Lambda (X) = \sum_{\mu}M_\mu
XM_\mu^\dagger$, where $\set{M_\mu}\subseteq \lin{\cH}$,
$M_\mu^\dagger$ is the adjoint operator of $M_\mu$. It is clear that
for the super-operator $\Lambda$, there is \emph{adjoint
super-operator} $\Lambda^{\dagger}\in\trans{\cH}$ such that for $A,
B\in\lin{\cH}$, $\inner{\Lambda(A)}{B} =
\inner{A}{\Lambda^{\dagger}(B)}$. Moreover, $\Lambda$ is a
completely positive super-operator if and only if $\Lambda^\dagger$
is also a completely positive super-operator. A \emph{quantum
channel} is just a trace-preserving completely positive
super-operator $\Phi$. If $\Phi$ is also unit-preserving, then it is
called \emph{unital quantum channel}.

It has been reviewed in \cite{Hiai} that,
\begin{lem}\label{lem:Hiai}
Let $\rho,\sigma\in\density{\cH}$, $\Phi\in\trans{\cH}$ be a quantum
channel. If $\mathrm{supp}(\rho) \subseteq \mathrm{supp}(\sigma)$,
then $\rS(\Phi(\rho)||\Phi(\sigma)) \leqslant \rS(\rho||\sigma)$;
moreover
\begin{eqnarray*}
\rS(\Phi(\rho)||\Phi(\sigma)) = \rS(\rho||\sigma) \quad \mbox{if and
only if}\quad \Phi^\dagger_\sigma \circ \Phi(\rho) = \rho,
\end{eqnarray*}
where $\Phi^\dagger_\sigma = \mathrm{Ad}_{\sigma^{1/2}} \circ
\Phi^\dagger \circ \mathrm{Ad}_{\Phi(\sigma)^{-1/2}}$.
\end{lem}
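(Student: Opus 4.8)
The plan is to split the statement into the monotonicity inequality, which is classical, and the equality characterisation, which carries the content, and to establish the two implications of the latter separately. As a preliminary reduction one restricts all operators to $\mathrm{supp}(\sigma)$ and to $\mathrm{supp}(\Phi(\sigma))$, so that $\sigma$ and $\Phi(\sigma)$ may be assumed invertible there and the generalised inverse in $\Phi^\dagger_\sigma$ becomes an honest inverse; also $\mathrm{supp}(\rho)\subseteq\mathrm{supp}(\sigma)$ gives $\rho\leqslant\lambda\sigma$ for some $\lambda>0$, hence $\Phi(\rho)\leqslant\lambda\Phi(\sigma)$ and $\mathrm{supp}(\Phi(\rho))\subseteq\mathrm{supp}(\Phi(\sigma))$, so all the relative entropies below are finite. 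For the inequality $\rS(\Phi(\rho)\|\Phi(\sigma))\leqslant\rS(\rho\|\sigma)$ I would invoke Lindblad--Uhlmann monotonicity; one self-contained route is to write $\Phi$, via Stinespring dilation, as conjugation by a unitary on $\cH\otimes\cE$ (after adjoining a fixed pure ancilla) followed by the partial trace over $\cE$, and to use that relative entropy is invariant under unitary conjugation and under adjoining a fixed pure state, reducing the claim to monotonicity under a partial trace, which is equivalent to joint convexity of relative entropy. For the equality analysis it is more convenient to use the modular form $\rS(\rho\|\sigma)=-\inner{\rho^{1/2}}{(\log\Delta_{\sigma|\rho})\,\rho^{1/2}}$, where $\Delta_{\sigma|\rho}\in\trans{\cH}$ is the relative modular operator $X\mapsto\sigma X\rho^{-1}$ and $\inner{\cdot}{\cdot}$ is the Hilbert--Schmidt inner product.

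For the implication ``$\Phi^\dagger_\sigma\circ\Phi(\rho)=\rho$ $\Rightarrow$ equality'', which is the easy half, I would first verify that $\Phi^\dagger_\sigma=\mathrm{Ad}_{\sigma^{1/2}}\circ\Phi^\dagger\circ\mathrm{Ad}_{\Phi(\sigma)^{-1/2}}$ is completely positive, being a composition of completely positive maps (the adjoint of a completely positive map is again completely positive), is trace preserving on operators supported in $\mathrm{supp}(\Phi(\sigma))$, and satisfies $\Phi^\dagger_\sigma(\Phi(\sigma))=\sigma$ --- the last point because $\Phi^\dagger(\I)=\I$ forces $\sigma^{1/2}\Phi^\dagger(P)\sigma^{1/2}\leqslant\sigma$ for $P$ the support projection of $\Phi(\sigma)$, while the two sides have equal trace. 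Granting this, applying the already-established monotonicity inequality to the channel $\Phi^\dagger_\sigma$ and the pair $\Phi(\rho),\Phi(\sigma)$ yields $\rS(\rho\|\sigma)=\rS\bigl(\Phi^\dagger_\sigma(\Phi(\rho))\,\big\|\,\Phi^\dagger_\sigma(\Phi(\sigma))\bigr)\leqslant\rS(\Phi(\rho)\|\Phi(\sigma))$, and combining with monotonicity for $\Phi$ forces the desired equality.

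The reverse implication, ``equality $\Rightarrow$ recovery'', is the crux. Put $\Delta_A=\Delta_{\sigma|\rho}$, $\Delta_B=\Delta_{\Phi(\sigma)|\Phi(\rho)}$, and let $V$ be the operator determined by $V\bigl(Y\,\Phi(\rho)^{1/2}\bigr)=\Phi^\dagger(Y)\,\rho^{1/2}$: it sends $\Phi(\rho)^{1/2}$ to $\rho^{1/2}$, it is a contraction ($V^\dagger V\leqslant\I$, by the Kadison--Schwarz inequality for the unital completely positive map $\Phi^\dagger$), and a short computation using Kadison--Schwarz gives $V^\dagger\Delta_A V\leqslant\Delta_B$ on the cyclic subspace generated by $\Phi(\rho)^{1/2}$. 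Evaluating in the vector $\Phi(\rho)^{1/2}$, the nonnegative difference $\rS(\rho\|\sigma)-\rS(\Phi(\rho)\|\Phi(\sigma))$ splits as a sum of two manifestly nonnegative contributions: one from the operator Jensen inequality $V^\dagger(\log\Delta_A)V\leqslant\log(V^\dagger\Delta_A V)$, and one from operator monotonicity of $\log$, namely $\log(V^\dagger\Delta_A V)\leqslant\log\Delta_B$. Hence equality forces $\log\Delta_B\,\Phi(\rho)^{1/2}=\log(V^\dagger\Delta_A V)\,\Phi(\rho)^{1/2}=V^\dagger(\log\Delta_A)V\,\Phi(\rho)^{1/2}$, and the equality case of the operator Jensen inequality for the strictly operator concave function $\log$ upgrades this to the intertwining relation $\Delta_A^{it}\,\rho^{1/2}=V\,\Delta_B^{it}\,\Phi(\rho)^{1/2}$ for all $t\in\real$, i.e.\ $\Phi^\dagger\bigl(\Phi(\sigma)^{it}\Phi(\rho)^{-it}\bigr)=\sigma^{it}\rho^{-it}$ for all $t\in\real$.

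I expect the main obstacle to be the passage from this one-parameter family of identities to the stated recovery condition. First, the equality case of the operator Jensen inequality must be handled with attention to the singularity of $\log$ at $0$ --- one either enlarges the space so that $V$ restricts to an isometry onto a reducing subspace, or argues through the family $\log(\,\cdot+\varepsilon)$, in either case keeping track of the kernels of $\sigma$ and $\Phi(\sigma)$. Second, since both sides of $\Phi^\dagger\bigl(\Phi(\sigma)^{it}\Phi(\rho)^{-it}\bigr)=\sigma^{it}\rho^{-it}$ are entire in $t$, one may continue to $t=i/2$ to get $\Phi^\dagger\bigl(\Phi(\sigma)^{-1/2}\Phi(\rho)^{1/2}\bigr)=\sigma^{-1/2}\rho^{1/2}$; equality in monotonicity moreover places $\Phi(\sigma)^{-1/2}\Phi(\rho)^{1/2}$ in the multiplicative domain of $\Phi^\dagger$, so that $\Phi^\dagger$ factors across $\bigl(\Phi(\sigma)^{-1/2}\Phi(\rho)^{1/2}\bigr)\bigl(\Phi(\rho)^{1/2}\Phi(\sigma)^{-1/2}\bigr)=\Phi(\sigma)^{-1/2}\Phi(\rho)\Phi(\sigma)^{-1/2}$, and substituting the continued identity collapses $\Phi^\dagger_\sigma\circ\Phi(\rho)=\sigma^{1/2}\Phi^\dagger\bigl(\Phi(\sigma)^{-1/2}\Phi(\rho)\Phi(\sigma)^{-1/2}\bigr)\sigma^{1/2}$ to $\rho$. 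Making the multiplicative-domain step rigorous --- equivalently, showing that $V$ is actually an isometry on the subspace it acts on and intertwines the full modular structure --- is the delicate point, and is precisely what the general equality analysis of the cited references is built to supply.
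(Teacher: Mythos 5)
The paper does not prove this lemma at all: it is imported verbatim from the cited reference of Hiai, Mosonyi, Petz and B\'eny, so there is no in-paper argument to compare against. Your sketch reproduces the architecture of the proof that actually lives in that reference (and in Petz's original papers): Lindblad--Uhlmann monotonicity via Stinespring plus joint convexity; the easy direction by observing that $\Phi^\dagger_\sigma$ is itself a channel on the relevant supports with $\Phi^\dagger_\sigma(\Phi(\sigma))=\sigma$ and applying monotonicity twice; the hard direction via the relative modular operators $\Delta_{\sigma|\rho}$, the contraction $V$ built from $\Phi^\dagger$, the operator Jensen / operator monotonicity splitting of the entropy difference, the resulting intertwining $\Phi^\dagger(\Phi(\sigma)^{it}\Phi(\rho)^{-it})=\sigma^{it}\rho^{-it}$, analytic continuation to $t=i/2$, and the multiplicative-domain collapse to $\Phi^\dagger_\sigma\circ\Phi(\rho)=\rho$. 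All of the individual assertions you make along the way are correct (including the trace-comparison argument for $\Phi^\dagger_\sigma(\Phi(\sigma))=\sigma$ and the Kadison--Schwarz verifications that $V$ is a contraction with $V^\dagger\Delta_A V\leqslant\Delta_B$). The one honest caveat is that your text is a proof plan rather than a proof: the two steps you yourself flag --- the equality case of the operator Jensen inequality for $\log$ in the presence of kernels, and the justification that $\Phi(\sigma)^{-1/2}\Phi(\rho)^{1/2}$ lies in the multiplicative domain of $\Phi^\dagger$ --- are precisely where the content of the theorem sits, and you defer them to the literature rather than carrying them out. Since the paper defers the entire lemma to that same literature, this puts you at (slightly better than) parity with the source you are being compared to, but a self-contained write-up would still need those two steps filled in.
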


Moreover, for a tripartite state, one has \cite{Linden,Cadney},
\begin{lem}\label{bi-SSA}
Let $\rho_{ABC}\in\density{\cH_A\ot\cH_B\ot\cH_C}$ for which strong
subadditivity is saturated for both triples $ABC, BAC$. Then
$\rho_{ABC}$ must have the following form:
$$
\rho_{ABC} = \bigoplus_{i,j}p_{ij} \rho_{a_i^L}^{(i)} \ot
\rho_{a_i^Rb_j^L}^{(ij)} \ot \rho_{b_j^R}^{(j)} \ot \rho_C^{(k)},
$$
where $k$ is a function only of $i,j$ in the sense that
$$
k = k(i,j) = k_1(i) = k_2(j) \quad \mbox{whenever} \quad p_{ij}>0.
$$
In particular, $k$ need only be defined where $p_{ij}>0$ so that it
is not necessarily constant. By collecting the terms of equivalent
$k$ we can write
$$
\rho_{ABC} = \bigoplus_k p_k \rho_{AB}^{(k)} \ot \rho_C^{(k)},
$$
where
$$
p_k\rho_{AB}^{(k)} = \sum_{i,j;k(i,j) = k} p_{ij} \rho_{a_i^L}^{(i)}
\ot \rho_{a_i^Rb_j^L}^{(ij)} \ot \rho_{b_j^R}^{(j)}.
$$
\end{lem}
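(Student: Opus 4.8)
\emph{Proof strategy.} The plan is to feed each of the two saturation hypotheses into the equality condition of Lemma~\ref{lem:Hiai}, thereby recovering the known one-variable structure theorem for a single saturated strong-subadditivity inequality, and then to show that the two structures so obtained are mutually compatible and fuse into the asserted normal form. For the triple $ABC$, strong subadditivity is the inequality $\rS\pa{\rho_{ABC}\,\|\,\rho_A\ot\rho_{BC}}\geqslant\rS\pa{\rho_{AB}\,\|\,\rho_A\ot\rho_B}$, since the left-hand side equals $\rS(\rho_A)+\rS(\rho_{BC})-\rS(\rho_{ABC})$ and the right-hand side equals $\rS(\rho_A)+\rS(\rho_B)-\rS(\rho_{AB})$, so that their difference is exactly the conditional mutual information that strong subadditivity declares nonnegative. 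This inequality is the monotonicity statement of Lemma~\ref{lem:Hiai} applied to the channel $\Phi$ that traces out the $C$ factor and re-attaches a fixed pure state there (so as to be an endomorphism of $\lin{\cH}$), with $\rho=\rho_{ABC}$ and $\sigma=\rho_A\ot\rho_{BC}$, whose supports nest as required. Hence saturation for $ABC$ is equivalent to $\Phi^{\dagger}_{\sigma}\circ\Phi(\rho)=\rho$, and unwinding this Petz-recovery identity for the partial trace --- the Hayden--Jozsa--Petz--Winter analysis, which I take as a black-box input --- produces an orthogonal splitting $\cH_B=\bigoplus_j\cH_{b_j^L}\ot\cH_{b_j^R}$ with $\rho_{ABC}=\bigoplus_j q_j\,\rho^{(j)}_{Ab_j^L}\ot\rho^{(j)}_{b_j^R C}$. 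Running the identical argument on the triple $BAC$ (now $A$ is the middle system) splits $\cH_A=\bigoplus_i\cH_{a_i^L}\ot\cH_{a_i^R}$; choosing the names of the two halves of $\cH_A$ so as to match the target formula, this reads $\rho_{ABC}=\bigoplus_i r_i\,\rho^{(i)}_{a_i^R B}\ot\rho^{(i)}_{a_i^L C}$.

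Next I would merge the two splittings. The first makes $\rho_{ABC}$ commute with $\I_A\ot P^B_j\ot\I_C$ for the projectors $P^B_j$ onto the summands of $\cH_B$, the second with $Q^A_i\ot\I_B\ot\I_C$ for the analogous $Q^A_i$, and these two families of projectors commute; hence $\rho_{ABC}$ is block-diagonal for the joint splitting $\cH_A\ot\cH_B=\bigoplus_{i,j}\pa{\cH_{a_i^L}\ot\cH_{a_i^R}}\ot\pa{\cH_{b_j^L}\ot\cH_{b_j^R}}$, and I write $\rho_{ABC}=\bigoplus_{i,j}p_{ij}\,\tilde\rho^{(ij)}$ with each $\tilde\rho^{(ij)}$ normalized and $p_{ij}$ the weight of the $(i,j)$-block. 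Since the summands of the first splitting occupy mutually orthogonal subspaces of $\cH_B$, the relation ``$\rho_{ABC}$ is $A$-block-diagonal'' forces each $\rho^{(j)}_{Ab_j^L}$ to be $A$-block-diagonal, so the first splitting restricts cleanly to the $(i,j)$-block and shows $\tilde\rho^{(ij)}$ is a product across the cut $\set{a_i^L,a_i^R,b_j^L}\mid\set{b_j^R,C}$ whose $b_j^R C$-part is a state $\rho^{(j)}_{b_j^R C}$ depending on $j$ alone; symmetrically the second splitting shows $\tilde\rho^{(ij)}$ is a product across $\set{a_i^R,b_j^L,b_j^R}\mid\set{a_i^L,C}$ whose $a_i^L C$-part is a state $\rho^{(i)}_{a_i^L C}$ depending on $i$ alone.

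The final step combines these two transversal product structures inside a single block. Taking the $a_i^L C$-marginal inside the first product form shows $\rho^{(i)}_{a_i^L C}$ itself factors as $\rho^{(i)}_{a_i^L}\ot\rho_C$, and its $C$-factor is a priori $j$-dependent; but the whole marginal depends on $i$ alone, so that $C$-factor depends on $i$ alone as well, hence --- \emph{whenever $p_{ij}>0$} --- it is simultaneously a function of $i$ alone and of $j$ alone, and indexing its distinct values by $k$ produces $k=k(i,j)=k_1(i)=k_2(j)$ together with a state $\rho^{(k)}_C$; the symmetric computation gives $\rho^{(j)}_{b_j^R C}=\rho^{(j)}_{b_j^R}\ot\rho^{(k)}_C$. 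Substituting both back and cancelling the common tensor factor $\rho^{(k)}_C$ leaves, on the $(i,j)$-block, an identity $\alpha^{(ij)}_{a_i^L a_i^R b_j^L}\ot\rho^{(j)}_{b_j^R}=\gamma^{(ij)}_{a_i^R b_j^L b_j^R}\ot\rho^{(i)}_{a_i^L}$ of operators on $\cH_{a_i^L}\ot\cH_{a_i^R}\ot\cH_{b_j^L}\ot\cH_{b_j^R}$; the right-hand side displays $a_i^L$ as a separated tensor factor, so the left-hand side must too, which forces $\alpha^{(ij)}_{a_i^L a_i^R b_j^L}=\rho^{(i)}_{a_i^L}\ot\rho^{(ij)}_{a_i^R b_j^L}$ with $\rho^{(i)}_{a_i^L}$ depending on $i$ alone. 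Assembling the pieces, $\tilde\rho^{(ij)}=\rho^{(i)}_{a_i^L}\ot\rho^{(ij)}_{a_i^R b_j^L}\ot\rho^{(j)}_{b_j^R}\ot\rho^{(k)}_C$; restoring the weights $p_{ij}$ yields the first displayed formula, and the second follows by grouping the direct sum according to the value of $k$ and pulling $\rho^{(k)}_C$ out of each group. I expect the principal obstacle to be exactly this paragraph --- squeezing the fourfold factorisation out of two crossing product decompositions, and above all tracking which marginals collapse onto a single index, the point being that the only surviving global datum is the $i$- and $j$-consistent label $k$, which (as the statement cautions) need not be constant --- while the one-triple structure theorem is used as a black box.
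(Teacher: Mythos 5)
The paper offers no proof of this lemma: it is imported verbatim from the cited works of Linden--Winter and Cadney--Linden--Winter, so there is no internal argument to compare yours against. Your reconstruction is, in substance, the proof given in those references, and it is correct. The two saturation conditions $I(A;C|B)=0$ and $I(B;C|A)=0$ are each equivalent, via the equality case of monotonicity (Lemma~\ref{lem:Hiai} with $\Phi$ the partial trace), to the Hayden--Jozsa--Petz--Winter structure, giving $\rho_{ABC}=\bigoplus_j q_j\,\rho^{(j)}_{Ab_j^L}\ot\rho^{(j)}_{b_j^RC}$ and $\rho_{ABC}=\bigoplus_i r_i\,\rho^{(i)}_{a_i^RB}\ot\rho^{(i)}_{a_i^LC}$; the block projectors of the two decompositions act on different tensor factors, hence commute, and both commute with $\rho_{ABC}$, so the joint $(i,j)$ block decomposition is legitimate and each block inherits both product structures with the $j$-only (resp.\ $i$-only) factors literally unchanged across $i$ (resp.\ $j$). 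Your extraction of the label $k$ --- the $C$-marginal of a block is simultaneously a function of $i$ alone and of $j$ alone wherever $p_{ij}>0$ --- is exactly the delicate point of the lemma and is handled correctly. The only step I would phrase more carefully is the final ``cancellation'': rather than appealing to $a_i^L$ being ``displayed as a separated factor,'' just trace $b_j^R$ out of the identity $\alpha^{(ij)}_{a_i^La_i^Rb_j^L}\ot\rho^{(j)}_{b_j^R}=\rho^{(i)}_{a_i^L}\ot\gamma^{(ij)}_{a_i^Rb_j^Lb_j^R}$ to obtain $\alpha^{(ij)}=\rho^{(i)}_{a_i^L}\ot\trace_{b_j^R}\pa{\gamma^{(ij)}}$ directly. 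Modulo the black-box use of the one-triple structure theorem --- which the paper itself assumes, at one further remove --- the proposal is sound.
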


\section{Quantum discord}

Consider a bipartite system $AB$ composed of subsystems $A$ and $B$.
Let $\rho_{AB}$ be the density operator of $AB$, and $\rho_A$ and
$\rho_B$ the reduced density operators. The total correlation
between the systems $A$ and $B$ is measured by the \emph{quantum
mutual information}
$$
I\Pa{\rho_{AB}} = \rS\Pa{\rho_A}  - \rS\Pa{\rho_A\big|\rho_B},
$$
where
$$
\rS\Pa{\rho_A\big|\rho_B} = \rS\Pa{\rho_{AB}} - \rS\Pa{\rho_B}
$$
is the entropy of $A$ conditional on $B$. The conditional entropy
can also be introduced by a measurement-based approach. Consider a
measurement locally performed on $B$, which can be described by a
set of projectors $\Pi_B =
\Set{\Pi_{B,\mu}}=\Set{\out{b_\mu}{b_\mu}}$. The state of the
quantum system, conditioned on the measurement of the outcome
labeled by $\mu$, is
$$
\rho_{AB,\mu} = \frac{1}{p_{B,\mu}}\Pa{\I_A \ot \Pi_{B,\mu}}
\rho_{AB} \Pa{\I_A \ot \Pi_{B,\mu}},
$$
where
$$
p_{B,\mu} = \Tr{(\I_A \ot \Pi_{B,\mu}) \rho_{AB} (\I_A \ot
\Pi_{B,\mu})} = \Innerm{b_\mu}{\rho_{AB}}{b_\mu}>0
$$
denotes the probability of obtaining the outcome $\mu$, and $\I_A$
denotes the identity operator for $A$. The conditional density
operator $\rho_{AB,\mu}$ allows for the following alternative
definition of the conditional entropy:
$$
\rS(\rho_{AB}|\Set{\Pi_{B,\mu}}) = \sum_\mu p_{B,\mu}
\rS(\rho_{AB,\mu}) = \sum_\mu p_{B,\mu} \rS(\rho_{A,\mu}),
$$
where $\rho_{A,\mu} = \Ptr{B}{\rho_{AB,\mu}} = (1/p_{B,\mu})
\Innerm{b_\mu}{\rho_{AB}}{b_\mu}$. Therefore, the quantum mutual
information can also be defined by
$$
I(\rho_{AB}|\Set{\Pi_{B,\mu}}) = \rS(\rho_A) -
\rS(\rho_{AB}|\Set{\Pi_{B,\mu}}).
$$
The quantities $I(\rho_{AB})$ and $I(\rho_{AB}|\Set{\Pi_{B,\mu}})$
are classically equivalent but distinct in the quantum case.

The one-sided \emph{quantum discord} is defined by:
$$
D_B(\rho_{AB}) = \inf_{\Pi_B} \Set{I(\rho_{AB}) -
I(\rho_{AB}|\Pi_B)}.
$$
If we denote the nonselective von Neumann measurement performed on
$B$ by
$$
\Pi_B(\rho_{AB}) = \sum_\mu (\I_A \ot \Pi_{B,\mu}) \rho_{AB} (\I_A
\ot \Pi_{B,\mu}) = \sum_\mu p_{B,\mu} \rho_{A,\mu} \ot
\out{b_\mu}{b_\mu},
$$
then the quantum discord can be written alternatively as
\begin{eqnarray*}
D_B(\rho_{AB}) &=& \inf_{\Pi_B} \Set{\rS(\rho_{AB}||\rho_A \ot
\rho_B) - \rS(\Pi_B(\rho_{AB})||\rho_A \ot
\Pi_B(\rho_B))}\\
&=& \inf_{\Pi_B} \Set{\rS(\rho_{AB}||\Pi_B(\rho_{AB})) -
\rS(\rho_B||\Pi_B(\rho_B))}.
\end{eqnarray*}
Apparently, $D_B(\rho_{AB})\geqslant0$ from Lemma~\ref{lem:Hiai}.

The \emph{symmetric quantum discord} of $\rho_{AB}$ is defined by
\cite{Rulli},
\begin{eqnarray*}
D(\rho_{AB}) =\inf_{\Pi_A \ot \Pi_B}
\Set{\rS(\rho_{AB}||\Pi_A\ot\Pi_B(\rho_{AB})) -
\rS(\rho_A||\Pi_A(\rho_A)) - \rS(\rho_B||\Pi_B(\rho_B))}.
\end{eqnarray*}
For the symmetric quantum discord of $\rho_{AB}$, one still has that
\begin{eqnarray}\label{ast}
D(\rho_{AB}) = \inf_{\Pi_A\ot\Pi_B} \Set{\rS(\rho_{AB}||\rho_A \ot
\rho_B) - \rS(\Pi_A \ot \Pi_B(\rho_{AB})||\Pi_A\ot\Pi_B(\rho_A \ot
\rho_B))}.
\end{eqnarray}

The symmetric quantum discord of $\rho_{A_1\ldots A_N}$ for
$N$-partite systems are defined by:
\begin{eqnarray*}
&&D(\rho_{A_1\ldots A_N}) \\
&&= \inf_{\Pi_{A_1} \ot \cdots \ot \Pi_{A_N}}
\Set{\rS(\rho_{A_1\ldots A_N}||\Pi_{A_1} \ot \cdots \ot
\Pi_{A_N}(\rho_{A_1\ldots A_N}))-\sum_{i=1}^N\rS(\rho_{A_i}||\Pi_{A_i}(\rho_{A_i}))}\\
&&=\inf_{\Pi_{A_1} \ot \cdots \ot \Pi_{A_N}} \{\rS(\rho_{A_1\ldots
A_N}||\rho_{A_1}\ot\cdots\ot\rho_{A_N})\\
&&- \rS(\Pi_{A_1} \ot \cdots \ot \Pi_{A_N}(\rho_{A_1\ldots
A_N})||\Pi_{A_1} \ot \cdots \ot
\Pi_{A_N}(\rho_{A_1}\ot\cdots\ot\rho_{A_N}))\},
\end{eqnarray*}
which is non-negative, $D(\rho_{A_1\ldots A_N}) \geqslant 0$.

The following theorem describes the structure of symmetric
zero-discord states:
\begin{thrm}
$D(\rho_{AB}) = 0$ if and only if
$$
\rho_{AB} = \sum_{\mu,\nu} \frac{p_{AB,\mu\nu}}{p_{A,\mu}
p_{B,\nu}}\sqrt{\rho_A}\Pi_{A,\mu}\sqrt{\rho_A} \ot
\sqrt{\rho_B}\Pi_{B,\nu}\sqrt{\rho_B}
$$
for both von Neumann measurements $\Pi_A = \Set{\Pi_{A,\mu}}$ and
$\Pi_B = \Set{\Pi_{B,\nu}}$, where
$$
p_{A,\mu} = \Tr{\Pi_{A,\mu}\rho_A},\quad p_{B,\nu} =
\Tr{\Pi_{B,\nu}\rho_B},\quad p_{AB,\mu\nu} = \Tr{\Pi_{A,\mu} \ot
\Pi_{B,\nu} \rho_{AB}}.
$$
\end{thrm}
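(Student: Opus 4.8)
The plan is to read the bracket inside the infimum defining $D(\rho_{AB})$ in~(\ref{ast}) as the monotonicity gap of Lemma~\ref{lem:Hiai}, so that its vanishing is controlled by the Petz recovery condition. First I would record two elementary facts: each nonselective von Neumann measurement $\Pi_X(\cdot)=\sum_\mu\Pi_{X,\mu}(\cdot)\Pi_{X,\mu}$ is a unital quantum channel that is self-adjoint as a super-operator, so $\Phi:=\Pi_A\ot\Pi_B$ is a unital quantum channel with $\Phi^\dagger=\Phi$; and $\mathrm{supp}(\rho_{AB})\subseteq\mathrm{supp}(\rho_A)\ot\mathrm{supp}(\rho_B)=\mathrm{supp}(\rho_A\ot\rho_B)$. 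Taking $\sigma:=\rho_A\ot\rho_B$, Lemma~\ref{lem:Hiai} then applies with $\rho=\rho_{AB}$: every bracket in~(\ref{ast}) is non-negative, and it equals zero precisely when $\Phi^\dagger_\sigma\circ\Phi(\rho_{AB})=\rho_{AB}$, with $\Phi^\dagger_\sigma=\mathrm{Ad}_{\sigma^{1/2}}\circ\Phi^\dagger\circ\mathrm{Ad}_{\Phi(\sigma)^{-1/2}}$. Hence $D(\rho_{AB})=0$ iff $\rho_{AB}$ is a fixed point of $\Phi^\dagger_\sigma\circ\Phi$ for some pair $\Pi_A,\Pi_B$ of von Neumann measurements; for the forward implication one also needs that the infimum in~(\ref{ast}) is attained, which I would obtain from compactness of the set of pairs of von Neumann measurements, using that after restricting to $\mathrm{supp}(\rho_A)\ot\mathrm{supp}(\rho_B)$ the probabilities $p_{A,\mu},p_{B,\nu}$ stay uniformly bounded away from $0$, so the entropies involved vary continuously.

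The substantive step is then to compute $\Phi^\dagger_\sigma\circ\Phi(\rho_{AB})$ in closed form. Since each $\Pi_{A,\mu}\ot\Pi_{B,\nu}$ has rank one,
$$\Phi(\rho_{AB})=\sum_{\mu,\nu}\Tr{(\Pi_{A,\mu}\ot\Pi_{B,\nu})\rho_{AB}}\;\Pi_{A,\mu}\ot\Pi_{B,\nu}=\sum_{\mu,\nu}p_{AB,\mu\nu}\,\Pi_{A,\mu}\ot\Pi_{B,\nu},$$
and $\Phi(\sigma)=\Pi_A(\rho_A)\ot\Pi_B(\rho_B)=\Pa{\sum_\mu p_{A,\mu}\Pi_{A,\mu}}\ot\Pa{\sum_\nu p_{B,\nu}\Pi_{B,\nu}}$, so $\Phi(\sigma)^{-1/2}$ acts on a surviving block $\Pi_{A,\mu}\ot\Pi_{B,\nu}$ (one with $p_{A,\mu},p_{B,\nu}>0$) by the scalar $(p_{A,\mu}p_{B,\nu})^{-1/2}$ on each side, the blocks with a vanishing probability being already absent from $\Phi(\rho_{AB})$. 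As $\Phi^\dagger=\Phi$ fixes each $\Pi_{A,\mu}\ot\Pi_{B,\nu}$, conjugating at the end by $\sigma^{1/2}=\sqrt{\rho_A}\ot\sqrt{\rho_B}$ gives
$$\Phi^\dagger_\sigma\circ\Phi(\rho_{AB})=\sum_{\mu,\nu}\frac{p_{AB,\mu\nu}}{p_{A,\mu}p_{B,\nu}}\,\sqrt{\rho_A}\,\Pi_{A,\mu}\,\sqrt{\rho_A}\ot\sqrt{\rho_B}\,\Pi_{B,\nu}\,\sqrt{\rho_B}.$$
Equating this with $\rho_{AB}$ is exactly the claimed identity, which settles ``only if''. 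For ``if'', the very same computation identifies the right-hand side of the claimed identity with $\Phi^\dagger_\sigma\circ\Phi(\rho_{AB})$ (the $p$'s being read off from $\rho_{AB}$ and $\sigma$ from its marginals); so if $\rho_{AB}$ equals that expression, it is a fixed point of $\Phi^\dagger_\sigma\circ\Phi$, and Lemma~\ref{lem:Hiai} forces the corresponding bracket in~(\ref{ast}) to vanish, whence $D(\rho_{AB})=0$ --- here no attainment argument is needed.

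I expect the obstacles to be of a bookkeeping nature rather than conceptual. The computational point requiring care is the generalized inverse $\Phi(\sigma)^{-1/2}$: one must verify that indices with $p_{A,\mu}=0$ or $p_{B,\nu}=0$ contribute nothing to $\Phi(\rho_{AB})$ (they do, because $p_{AB,\mu\nu}\leq\min\{p_{A,\mu},p_{B,\nu}\}$), so that $\Phi(\sigma)^{-1/2}$ may be treated as an ordinary inverse on the range of $\Phi(\sigma)$. The point requiring a little thought is the attainment of the infimum in the forward direction; passing to $\mathrm{supp}(\rho_A)\ot\mathrm{supp}(\rho_B)$, where all measurement probabilities are bounded below, makes the relative entropies continuous in the measurement and lets the compactness argument close. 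Everything else is direct substitution, and it may be worth remarking afterwards that the resulting form is consistent with the familiar characterization of symmetric zero-discord states as classical--classical states.
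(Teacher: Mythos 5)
Your proposal is correct and follows essentially the same route as the paper: apply the Petz equality condition of Lemma~\ref{lem:Hiai} with $\sigma=\rho_A\ot\rho_B$ and $\Phi=\Pi_A\ot\Pi_B$, and identify the recovered state $\Phi^\dagger_\sigma\circ\Phi(\rho_{AB})$ with the claimed expression. You in fact supply details the paper leaves implicit --- the explicit block computation of $\Phi^\dagger_\sigma\circ\Phi(\rho_{AB})$, the treatment of the generalized inverse $\Phi(\sigma)^{-1/2}$, and the attainment of the infimum needed in the forward direction.
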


\begin{proof}
Clearly, $\mathrm{supp}\Pa{\rho_{AB}} \subseteq
\mathrm{supp}\Pa{\rho_A} \ot \mathrm{supp}\Pa{\rho_B} =
\mathrm{supp}\Pa{\rho_A \ot \rho_B}$ \cite{Renner}. Since
$D(\rho_{AB})=0$, from Eq.~\eqref{ast}, it follows that there exist
two von Neumann measurement $\Pi_A = \Set{\Pi_{A,\mu}}$ and $\Pi_B =
\Set{\Pi_{B,\nu}}$ such that
$$
\rS(\Pi_A \ot \Pi_B(\rho_{AB})||\Pi_A\ot\Pi_B(\rho_A \ot \rho_B)) =
\rS(\rho_{AB}||\rho_A \ot \rho_B).
$$
Assume that $\sigma = \rho_A \ot \rho_B, \Phi = \Pi_A \ot \Pi_B$ in
Lemma~\ref{lem:Hiai}. Therefore $D(\rho_{AB}) = 0$ if and only if
$$
\rS(\Pi_A \ot \Pi_B(\rho_{AB})||\Pi_A\ot\Pi_B(\rho_A \ot \rho_B)) =
\rS(\rho_{AB}||\rho_A \ot \rho_B).
$$
Namely
$$ \rho_{AB} = \Phi^\dagger_\sigma \circ\Phi(\rho_{AB}) =
((\Pi^\dagger_{A,\rho_A}\circ\Pi_A) \ot
(\Pi^\dagger_{B,\rho_B}\circ\Pi_B))(\rho_{AB})
$$
Therefore
$$
\rho_{AB} = \sum_{\mu,\nu} \frac{p_{AB,\mu\nu}}{p_{A,\mu}
p_{B,\nu}}\sqrt{\rho_A}\Pi_{A,\mu}\sqrt{\rho_A} \ot
\sqrt{\rho_B}\Pi_{B,\nu}\sqrt{\rho_B}.
$$
\end{proof}

Accordingly we have
\begin{cor}
$D_B(\rho_{AB}) = 0$ if and only if
\begin{equation}\label{c1}
\rho_{AB} = \sum_\mu \rho_{A,\mu} \ot
\sqrt{\rho_B}\Pi_{B,\mu}\sqrt{\rho_B}
\end{equation}
for some von Neumann measurement $\Pi_B = \Set{\Pi_{B,\mu}}$, where
$$
\rho_{A,\mu} = \frac{1}{p_{B,\mu}}\Ptr{B}{\I_A \ot
\Pi_{B,\mu}\rho_{AB}},\quad p_{B,\mu} = \Tr{\Pi_{B,\mu}\rho_B}.
$$
\end{cor}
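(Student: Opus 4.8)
The plan is to rerun the argument of the Theorem, now with the quantum channel $\Phi = \I_A \ot \Pi_B$ and reference state $\sigma = \rho_A \ot \rho_B$ in Lemma~\ref{lem:Hiai}. First I would note that in the paper's notation $\Pi_B(\rho_{AB})$ abbreviates $(\I_A \ot \Pi_B)(\rho_{AB})$ and that $\rho_A \ot \Pi_B(\rho_B) = (\I_A \ot \Pi_B)(\rho_A \ot \rho_B)$, so the bracketed quantity defining $D_B(\rho_{AB})$ is precisely $\rS(\rho_{AB}||\sigma) - \rS(\Phi(\rho_{AB})||\Phi(\sigma))$. Since $\mathrm{supp}(\rho_{AB}) \subseteq \mathrm{supp}(\rho_A)\ot\mathrm{supp}(\rho_B) = \mathrm{supp}(\sigma)$ and $\Phi$ is a genuine (trace-preserving, completely positive) quantum channel, Lemma~\ref{lem:Hiai} makes each bracket non-negative; hence $D_B(\rho_{AB})=0$ if and only if some von Neumann measurement $\Pi_B$ achieves equality, equivalently $\Phi^\dagger_\sigma \circ \Phi(\rho_{AB}) = \rho_{AB}$ for that $\Pi_B$.

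The substantive step is unwinding $\Phi^\dagger_\sigma = \mathrm{Ad}_{\sigma^{1/2}} \circ \Phi^\dagger \circ \mathrm{Ad}_{\Phi(\sigma)^{-1/2}}$. Because each $\Pi_{B,\mu}$ is an orthogonal projector, the nonselective measurement $\Pi_B$ is self-adjoint as a super-operator, so $\Phi^\dagger = \I_A \ot \Pi_B$. Using $\sigma^{1/2} = \sqrt{\rho_A}\ot\sqrt{\rho_B}$, $\Phi(\sigma) = \rho_A \ot \Pi_B(\rho_B)$ with $\Pi_B(\rho_B) = \sum_\mu p_{B,\mu}\Pi_{B,\mu}$, and the identity $\Pi_{B,\mu}\Pi_B(\rho_B)^{-1/2} = p_{B,\mu}^{-1/2}\Pi_{B,\mu}$, the $A$-tensor-factor of $\Phi^\dagger_\sigma$ collapses to $\I_A$ on $\mathrm{supp}(\rho_A)$ while the $B$-factor becomes $X \mapsto \sum_\mu p_{B,\mu}^{-1}\sqrt{\rho_B}\,\Pi_{B,\mu}X\Pi_{B,\mu}\sqrt{\rho_B}$. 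Applying this to $\rho_{AB}$ and inserting $(\I_A\ot\Pi_{B,\mu})\rho_{AB}(\I_A\ot\Pi_{B,\mu}) = p_{B,\mu}\,\rho_{A,\mu}\ot\Pi_{B,\mu}$, which is just the definition of $\rho_{A,\mu}$, gives $\Phi^\dagger_\sigma\circ\Phi(\rho_{AB}) = \sum_\mu \rho_{A,\mu}\ot\sqrt{\rho_B}\,\Pi_{B,\mu}\sqrt{\rho_B}$; equating this with $\rho_{AB}$ yields \eqref{c1}.

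For the converse I would substitute the right-hand side of \eqref{c1} back into this computation: one checks $\Ptr{A}{\rho_{AB}} = \rho_B$ (using $\Tr{\rho_{A,\mu}}=1$ and $\sum_\mu\Pi_{B,\mu}=\I_B$) and that the displayed expression is a fixed point of $\Phi^\dagger_\sigma\circ\Phi$ for the very measurement $\Pi_B$ occurring in \eqref{c1}, whence Lemma~\ref{lem:Hiai} forces $\rS(\Phi(\rho_{AB})||\Phi(\sigma)) = \rS(\rho_{AB}||\sigma)$ and therefore $D_B(\rho_{AB})=0$. The only delicate points are bookkeeping: justifying that the generalized inverses $\rho_A^{-1/2}$ and $\Pi_B(\rho_B)^{-1/2}$ act as honest inverses because everything in sight is supported on $\mathrm{supp}(\rho_A)\ot\mathrm{supp}(\rho_B)$, and confirming the self-adjointness of $\Pi_B$ and channel property of $\I_A\ot\Pi_B$ so that Lemma~\ref{lem:Hiai} is applicable. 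I expect the reduction of $\Phi^\dagger_\sigma$ together with the support housekeeping to be the only real work; everything else is a transcription of the Theorem's proof with $\Pi_A$ replaced by the identity super-operator on $A$.
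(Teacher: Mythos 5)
Your proof is correct and follows exactly the route the paper intends: the paper states this corollary without a separate proof (``Accordingly we have''), deriving it by the same Petz-recovery argument as the Theorem, namely applying Lemma~\ref{lem:Hiai} with $\Phi=\I_A\ot\Pi_B$ and $\sigma=\rho_A\ot\rho_B$ and unwinding $\Phi^{\dagger}_{\sigma}$. Your explicit computation of $\Phi^{\dagger}_{\sigma}\circ\Phi(\rho_{AB})=\sum_{\mu}\rho_{A,\mu}\ot\sqrt{\rho_B}\,\Pi_{B,\mu}\sqrt{\rho_B}$, together with the support bookkeeping and the self-adjointness of the measurement super-operator, merely makes explicit the details the paper leaves implicit.
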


\begin{remark}
Suppose that the von Neumann measurement in Eq.~\eqref{c1} is $\Pi_B
= \Set{\Pi_{B,\mu}} = \Set{\out{b_\mu}{b_\mu}}$. Then we can assert
that $\Ket{b_\mu}$ is the eigenvectors of $\rho_B$. This can be seen
as follows. From Eq.~\eqref{c1} it follows that
\begin{equation}\label{r1}
\Pi_B(\rho_{AB}) = \sum_\mu \rho_{A,\mu} \ot
\Pi_B(\sqrt{\rho_B}\Pi_{B,\mu}\sqrt{\rho_B}).
\end{equation}
Actually,
\begin{equation}\label{r2}
\Pi_B(\rho_{AB}) = \sum_\mu (\I_A \ot \Pi_{B,\mu}) \rho_{AB} (\I_A
\ot \Pi_{B,\mu}) = \sum_\mu p_{B,\mu} \rho_{A,\mu} \ot \Pi_{B,\mu}.
\end{equation}
From Eq.~\eqref{r1} and Eq.~\eqref{r2}, we have
$$
\Pi_B(\sqrt{\rho_B}\Pi_{B,\mu}\sqrt{\rho_B}) = p_{B,\mu}\Pi_{B,\mu},
$$
which implies that
\begin{eqnarray}
\left\{\begin{array}{cc}
  \Pi_{B,\mu}\sqrt{\rho_B}\Pi_{B,\nu}\sqrt{\rho_B}\Pi_{B,\mu} = 0, & \mbox{if}\quad \mu\neq \nu, \\[3mm]
  \Pi_{B,\mu}\sqrt{\rho_B}\Pi_{B,\mu}\sqrt{\rho_B}\Pi_{B,\mu} = p_{B,\mu}\Pi_{B,\mu}, &
  \mbox{otherwise}.
\end{array}\right.
\end{eqnarray}
That is,
\begin{eqnarray*}
\left\{\begin{array}{cc}
  \abs{\Innerm{b_\mu}{\sqrt{\rho_B}}{b_\nu}}^2 = 0 & \mbox{if}\quad \mu\neq \nu, \\
  \Innerm{b_\mu}{\sqrt{\rho_B}}{b_\mu} = \sqrt{p_{B,\mu}} = \sqrt{\Innerm{b_\mu}{\rho_B}{b_\mu}} &
  \mbox{otherwise}.
\end{array}\right.
\end{eqnarray*}
Thus we conclude that $\Set{\Ket{b_\mu}}$ is the eigenvectors of
$\rho_B$.
\end{remark}

For general multipartite case we have
\begin{cor}
$D(\rho_{A_1\ldots A_N}) = 0$ if and only if
$$
\rho_{A_1\ldots A_N} = \sum_{\mu_1,\ldots,\mu_N} \frac{p_{A_1\ldots
A_N,\mu_1\ldots \mu_N}}{p_{A_1,\mu_1}\cdots
p_{A_N,\mu_N}}\sqrt{\rho_{A_1}}\Pi_{A_1,\mu_1}\sqrt{\rho_{A_1}} \ot
\cdots \ot \sqrt{\rho_{A_N}}\Pi_{A_N,\mu_N}\sqrt{\rho_{A_N}}
$$
for $N$ von Neumann measurements $\Pi_{A_i} =
\Set{\Pi_{A_i,\mu_i}}$, where
$$
p_{A_i,\mu_i} = \Tr{\Pi_{A_i,\mu_i}\rho_{A_i}}(i = 1,\ldots,N),\quad
p_{A_1\ldots A_N,\mu_1\ldots \mu_N}= \Tr{\Pi_{A_1,\mu_1} \ot \cdots
\ot \Pi_{A_N,\mu_N} \rho_{A_1\ldots A_N}}.
$$
\end{cor}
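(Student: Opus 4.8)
The plan is to mimic, step for step, the proof of the Theorem, with the bipartite product state $\rho_A\ot\rho_B$ and product measurement $\Pi_A\ot\Pi_B$ replaced by $\sigma = \rho_{A_1}\ot\cdots\ot\rho_{A_N}$ and $\Phi = \Pi_{A_1}\ot\cdots\ot\Pi_{A_N}$; the point is that the tensor structure lets every ingredient of that proof factorize across the $N$ parties. First I would record the support inclusion $\mathrm{supp}(\rho_{A_1\ldots A_N}) \subseteq \mathrm{supp}(\rho_{A_1})\ot\cdots\ot\mathrm{supp}(\rho_{A_N}) = \mathrm{supp}(\sigma)$, so that all relative entropies in sight are finite. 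Then, using the relative-entropy expression for $D(\rho_{A_1\ldots A_N})$ in its definition, each choice of measurements contributes $\rS(\rho_{A_1\ldots A_N}||\sigma) - \rS(\Phi(\rho_{A_1\ldots A_N})||\Phi(\sigma))$, which is $\geqslant 0$ by Lemma~\ref{lem:Hiai}; hence $D(\rho_{A_1\ldots A_N}) = 0$ is equivalent to the existence of von Neumann measurements $\Pi_{A_1},\ldots,\Pi_{A_N}$ achieving equality, and by the equality clause of Lemma~\ref{lem:Hiai} this is equivalent to $\Phi^\dagger_\sigma\circ\Phi(\rho_{A_1\ldots A_N}) = \rho_{A_1\ldots A_N}$, where $\Phi^\dagger_\sigma = \mathrm{Ad}_{\sigma^{1/2}}\circ\Phi^\dagger\circ\mathrm{Ad}_{\Phi(\sigma)^{-1/2}}$.

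The substantive step is to evaluate $\Phi^\dagger_\sigma\circ\Phi$. Since $\sigma$ and $\Phi$ are tensor products, so is $\Phi(\sigma) = \Pi_{A_1}(\rho_{A_1})\ot\cdots\ot\Pi_{A_N}(\rho_{A_N})$, and because $\mathrm{Ad}$, composition, and the adjoint all commute with tensoring (together with $(\sigma)^{1/2} = \rho_{A_1}^{1/2}\ot\cdots\ot\rho_{A_N}^{1/2}$ and the analogous identity for $\Phi(\sigma)^{-1/2}$), I would conclude $\Phi^\dagger_\sigma = \Pi^\dagger_{A_1,\rho_{A_1}}\ot\cdots\ot\Pi^\dagger_{A_N,\rho_{A_N}}$ with $\Pi^\dagger_{A_i,\rho_{A_i}} = \mathrm{Ad}_{\rho_{A_i}^{1/2}}\circ\Pi_{A_i}^\dagger\circ\mathrm{Ad}_{\Pi_{A_i}(\rho_{A_i})^{-1/2}}$ and $\Pi_{A_i}^\dagger = \Pi_{A_i}$, projective measurements being self-adjoint super-operators. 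Hence $\Phi^\dagger_\sigma\circ\Phi = (\Pi^\dagger_{A_1,\rho_{A_1}}\circ\Pi_{A_1})\ot\cdots\ot(\Pi^\dagger_{A_N,\rho_{A_N}}\circ\Pi_{A_N})$, and the single-system identity used in the proof of the Theorem gives $(\Pi^\dagger_{A_i,\rho_{A_i}}\circ\Pi_{A_i})(Y) = \sum_{\mu_i}\frac{\Tr{\Pi_{A_i,\mu_i}Y}}{p_{A_i,\mu_i}}\sqrt{\rho_{A_i}}\Pi_{A_i,\mu_i}\sqrt{\rho_{A_i}}$ for each $i$. Applying the $N$-fold tensor product of these maps to $\rho_{A_1\ldots A_N}$ and absorbing the resulting scalars into $p_{A_1\ldots A_N,\mu_1\ldots\mu_N} = \Tr{\Pi_{A_1,\mu_1}\ot\cdots\ot\Pi_{A_N,\mu_N}\rho_{A_1\ldots A_N}}$ produces exactly the right-hand side of the asserted formula. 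So $\Phi^\dagger_\sigma\circ\Phi(\rho_{A_1\ldots A_N}) = \rho_{A_1\ldots A_N}$ says precisely that $\rho_{A_1\ldots A_N}$ has the claimed form, which is the ``only if'' direction; conversely, if $\rho_{A_1\ldots A_N}$ has that form for some measurements $\Pi_{A_i}$, the same identity shows $\Phi^\dagger_\sigma\circ\Phi$ fixes it for those measurements, so the corresponding term in the infimum vanishes and $D(\rho_{A_1\ldots A_N}) = 0$ by nonnegativity.

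I do not expect a genuine obstacle: no idea beyond the bipartite Theorem is required, and the work is the multi-index bookkeeping together with the clean observation that $\Phi^\dagger_\sigma$ factorizes, which reduces everything to $N$ independent copies of a computation already carried out. The one place that deserves a line of care is the passage, in the first paragraph, from ``the infimum equals $0$'' to ``some tuple of measurements attains $0$'': in finite dimensions this is justified by compactness of the set of $N$-tuples of von Neumann measurements together with continuity of $(\Pi_{A_1},\ldots,\Pi_{A_N}) \mapsto \rS(\Phi(\rho_{A_1\ldots A_N})||\Phi(\sigma))$ on the relevant domain, the latter being a finite sum of terms $t\log t$ in the (jointly continuous) outcome probabilities $p_{A_1\ldots A_N,\mu_1\ldots\mu_N}$ and $p_{A_i,\mu_i}$.
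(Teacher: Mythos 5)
Your proof is correct and follows exactly the route the paper intends: the corollary is stated there without its own proof, as the $N$-party analogue of the Theorem, and your factorization of $\Phi^\dagger_\sigma\circ\Phi$ into $(\Pi^\dagger_{A_1,\rho_{A_1}}\circ\Pi_{A_1})\ot\cdots\ot(\Pi^\dagger_{A_N,\rho_{A_N}}\circ\Pi_{A_N})$ is precisely the bipartite computation repeated $N$ times. Your closing remark on attainment of the infimum patches a point the paper glosses over even in the bipartite case, and the continuity claim is sound once one notes that the support inclusion gives $\rho_{A_1\ldots A_N}\leqslant c\,\sigma$ in finite dimensions, which keeps each term $p\log(p/q)$ continuous on the region $p\leqslant cq$.
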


In order to obtain a connection with strong subadditivity of quantum
entropy \cite{Datta}, we associate each von Neumann measurement
$\Pi_B = \Set{\Pi_{B,\mu}}$ with a system $C$ as follows:
\begin{eqnarray}\label{d1}
\sigma_{ABC} = V\rho_{AB}V^\dagger = \sum_{\mu,\nu} (\I_A \ot
\Pi_{B,\mu})\rho_{AB} (\I_A \ot \Pi_{B,\nu})\ot \out{\mu}{\nu}_C,
\end{eqnarray}
where
$$
V\ket{\psi_B} \defeq \sum_\mu \Pi_{B,\mu}\ket{\psi_B} \ot
\ket{\mu}_C
$$
is an isometry from $B$ to $BC$. From Eq.~\eqref{d1} we have
\begin{eqnarray*}
\sigma_{AB} &=& \Ptr{C}{V\rho_{AB}V^\dagger} = \Pi_B\Pa{\rho_{AB}} = \sum_\mu p_{B,\mu} \rho_{A,\mu} \ot \Pi_{B,\mu},\\
\sigma_{BC} &=& \Ptr{A}{V\rho_{AB}V^\dagger} = \sum_{\mu,\nu}
\Pi_{B,\mu}\rho_B\Pi_{B,\nu}\ot \out{\mu}{\nu}_C,\\
\sigma_B &=& \sum_\mu p_{B,\mu}\Pi_{B,\mu},
\end{eqnarray*}
where $p_{B,\mu} = \Tr{\rho_B\Pi_{B,\mu}}$. This implies that the
conditional mutual information between $A$ and $C$ conditioned on
$B$ is
\begin{eqnarray*}
I(A;C|B)_\sigma &\defeq& \rS(\sigma_{AB}) + \rS(\sigma_{BC}) -
\rS(\sigma_{ABC}) - \rS(\sigma_B)\\
&=& \sum_\mu p_{B,\mu}\rS(\rho_{A,\mu}) + \rS(\rho_B) - \rS(\rho_{AB})\\
&=& \rS(\rho_{AB}||\rho_A \ot \rho_B) - \rS(\Pi_B(\rho_{AB})||\rho_A
\ot \Pi_B(\rho_B)).
\end{eqnarray*}

Similarly we have
$$
I(A;B|C)_\sigma = \rS\Pa{\rho_{AB}||\rho_A \ot \rho_B} -
\rS(\Pi_B(\rho_{AB})||\rho_A \ot \Pi_B(\rho_B)).
$$
That is,
\begin{eqnarray}\label{DoubleSSA}
I(A;C|B)_\sigma = I(A;B|C)_\sigma = \rS(\rho_{AB}||\rho_A \ot
\rho_B) - \rS(\Pi_B(\rho_{AB})||\rho_A \ot \Pi_B(\rho_B)).
\end{eqnarray}
If Eq.~(\ref{DoubleSSA}) vanishes for some von Neumann measurement
$\Pi_B = \Set{\Pi_{B,\mu}}$, $I(A;C|B)_\sigma = I(A;B|C)_\sigma =
0$, then from Lemma~\ref{bi-SSA}(i),
$$
\sigma_{ABC} = \bigoplus_k p_k \sigma^{(k)}_A \ot \sigma^{(k)}_{BC}.
$$
If $D_B(\rho_{AB}) = \rS(\rho_{AB}||\rho_A \ot \rho_B) -
\rS(\Pi_B(\rho_{AB})||\rho_A \ot \Pi_B(\rho_B))$ for some von
Neumann measurement $\Pi_B$, then
$$
D_B(\rho_{AB}) = I(A;B|C)_\sigma.
$$
There exists a famous protocol---state redistribution---which gives
an operational interpretation of conditional mutual information
$I(A;B|C)_\sigma$ \cite{Devetak,Yard}. This amounts to give
implicitly an operational interpretation of quantum discord
\cite{Madhok,Cavalcanti}.

\section{A generalization of zero-discord states}

Denote
\begin{eqnarray*}
\Omega^0_A &\defeq& \Set{\rho_{AB}\in\density{\cH_A\ot\cH_B}:
D_A(\rho_{AB}) = 0},\\
\Omega^0 &\defeq& \Set{\rho_{AB}\in\density{\cH_A\ot\cH_B}:
D(\rho_{AB}) = 0}.
\end{eqnarray*}
Suppose $\rho_{AB}\in\density{\cH_A\ot\cH_B}$, with two marginal
density matrices are $\rho_A = \Ptr{B}{\rho_{AB}}$ and $\rho_B =
\Ptr{A}{\rho_{AB}}$, respectively. A sufficient condition for
zero-discord states has been derived in \cite{Ferraro}: if
$\rho_{AB}\in\Omega^0_A$, then $\Br{\rho_{AB},\rho_A\ot\I_B}=0$.

A characterization of $\Br{\rho_{AB},\rho_A\ot\I_B}=0$ is obtained
in \cite{Cesar}, $\Br{\rho_{AB},\rho_A\ot\I_B}=0$ if and only if
$\rho_{AB} = \Pi_A(\rho_{AB})$, where $\Pi_A = \Set{\Pi_{A,\mu}}$ is
some positive valued measurement for which each projector
$\Pi_{A,\mu}$ is of any rank. That is,
$$
\rho_{AB} = \sum_\mu (\Pi_{A,\mu}\ot\I_B) \rho_{AB}
(\Pi_{A,\mu}\ot\I_B).
$$

States $\rho_{AB}$ such that $\Br{\rho_{AB},\rho_A\ot\I_B}=0$ are
called \emph{lazy ones} with particular physical interpretations
\cite{Cesar}. Consider general evolution of the state in a
finite-dimensional composite system $AB$:
$$
\Br{\frac{d}{dt}\rho_{AB,t}}_{t=\tau} =
-\mathrm{i}\Br{H_{AB},\rho_{AB,\tau}},
$$
where the total Hamiltonian is $H_{AB}\equiv H_A\ot\I_B + \I_A\ot
H_B + H_{\mathrm{int}}$, which consists of the system, the
environment and the interaction Hamiltonians. Clearly, it is
required that $\Ptr{A}{H_{\mathrm{int}}} = \Ptr{B}{H_{\mathrm{int}}}
= 0$. For the system $A$, the change rate of the system entropy at a
time $\tau$ is given by \cite{Ferraro}:
\begin{eqnarray}
\Br{\frac{d}{dt}\rS(\rho_{A,t})}_{t=\tau} =
-\mathrm{i}\Tr{H_{\mathrm{int}}\Br{\rho_{AB,\tau},\log_2(\rho_{A,\tau})\ot\I_B}}.
\end{eqnarray}
Since the von Neumann entropy $\rS(\rho_X)$ of $\rho_X$ quantifies
the degree of decoherence of the system $X(=A,B)$, it follows that
the system entropy rates are independent of the $AB$ coupling if and
only if
$$
\Br{\frac{d}{dt}\rS(\rho_{A,t})}_{t=\tau} = 0,
$$
which is equivalent to the following expression:
$$
\Br{\rho_{AB,\tau},\log_2(\rho_{A,\tau})\ot\I_B} = 0
\Longleftrightarrow \Br{\rho_{AB,\tau},\rho_{A,\tau}\ot\I_B} = 0.
$$
In view of this point, the entropy of quantum systems can be
preserved from decoherence under any coupling between $A$ and $B$ if
and only if the composite system states are lazy ones.

From the symmetry with respect to $A$ and $B$, one has
\begin{eqnarray}
\Br{\frac{d}{dt}\rS(\rho_{B,t})}_{t=\tau} =
-\mathrm{i}\Tr{H_{\mathrm{int}}\Br{\rho_{AB,\tau},\I_A\ot\log(\rho_{B,\tau})}}.
\end{eqnarray}
Due to that
$$
\Br{\frac{d}{dt}I(\rho_{AB,t})}_{t=\tau} =
\Br{\frac{d}{dt}\rS(\rho_{A,t})}_{t=\tau} +
\Br{\frac{d}{dt}\rS(\rho_{B,t})}_{t=\tau} -
\Br{\frac{d}{dt}\rS(\rho_{AB,t})}_{t=\tau}
$$
and
$$
\Br{\frac{d}{dt}\rS(\rho_{AB,t})}_{t=\tau} = 0,
$$
we have further
\begin{eqnarray}\label{eq:mutualentropy}
\Br{\frac{d}{dt}I(\rho_{AB,t})}_{t=\tau} =
-\mathrm{i}\Tr{H_{\mathrm{int}}\Br{\rho_{AB,\tau},\log(\rho_{A,\tau}\ot\rho_{B,\tau})}}.
\end{eqnarray}

We can see from Eq.~(\ref{eq:mutualentropy}) that the
total correlation is preserved under any coupling between $A$ and
$B$ if and only if the mutual entropy rate of composite system $AB$
is zero:
$$
\Br{\frac{d}{dt}I(\rho_{AB,t})}_{t=\tau} = 0,
$$
which is equivalent to the following expression:
$$
\Br{\rho_{AB,\tau},\log(\rho_{A,\tau}\ot\rho_{B,\tau})} = 0
\Longleftrightarrow
\Br{\rho_{AB,\tau},\rho_{A,\tau}\ot\rho_{B,\tau}} = 0.
$$

Similarly, we have:
\begin{prop}
If $\rho_{AB}\in\Omega^0$, then $\Br{\rho_{AB},\rho_A\ot\rho_B}=0$.
\end{prop}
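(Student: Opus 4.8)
The plan is to deduce the symmetric statement from the two one-sided ``lazy state'' implications already recorded in this section: the sufficient condition of \cite{Ferraro} says that $\rho_{AB}\in\Omega^0_A$ forces $\Br{\rho_{AB},\rho_A\ot\I_B}=0$, and by the $A\leftrightarrow B$ symmetry of all the definitions $D_B(\rho_{AB})=0$ likewise forces $\Br{\rho_{AB},\I_A\ot\rho_B}=0$. Granting these, the conclusion is purely algebraic, so the only substantive point is to show that vanishing of the \emph{symmetric} discord implies vanishing of \emph{both} one-sided discords.

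First I would establish the monotonicity bounds $D(\rho_{AB})\geqslant D_A(\rho_{AB})$ and, symmetrically, $D(\rho_{AB})\geqslant D_B(\rho_{AB})$. Fix von Neumann measurements $\Pi_A$ and $\Pi_B$ and write the pinching channel as $\Pi_A\ot\Pi_B=(\I_A\ot\Pi_B)\circ(\Pi_A\ot\I_B)$. Because $\mathrm{supp}(\rho_{AB})\subseteq\mathrm{supp}(\rho_A\ot\rho_B)$ (as used in the proof of the Theorem) and support inclusion is preserved by completely positive maps, Lemma~\ref{lem:Hiai} applied to the channel $\I_A\ot\Pi_B$ gives
\begin{equation*}
\rS\Pa{\Pi_A\ot\Pi_B(\rho_{AB})\,||\,\Pi_A\ot\Pi_B(\rho_A\ot\rho_B)}\leqslant\rS\Pa{\Pi_A(\rho_{AB})\,||\,\Pi_A(\rho_A)\ot\rho_B},
\end{equation*}
where $\Pi_A(\,\cdot\,)$ abbreviates $(\Pi_A\ot\I_B)(\,\cdot\,)$. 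Subtracting both sides from $\rS(\rho_{AB}||\rho_A\ot\rho_B)$ reverses the inequality and shows that the expression under the infimum defining $D(\rho_{AB})$ at $(\Pi_A,\Pi_B)$ is at least the expression under the infimum defining $D_A(\rho_{AB})$ at $\Pi_A$; taking the infimum over $\Pi_B$ and then over $\Pi_A$ yields $D(\rho_{AB})\geqslant D_A(\rho_{AB})$, and the other bound follows by exchanging $A$ and $B$. Since $D_A,D_B\geqslant 0$, the hypothesis $D(\rho_{AB})=0$ now forces $D_A(\rho_{AB})=0$ and $D_B(\rho_{AB})=0$, i.e.\ $\rho_{AB}\in\Omega^0_A$ and $\rho_{AB}\in\Omega^0_B$.

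Invoking the two one-sided implications gives $\Br{\rho_{AB},\rho_A\ot\I_B}=0$ and $\Br{\rho_{AB},\I_A\ot\rho_B}=0$. As $\rho_A\ot\I_B$ and $\I_A\ot\rho_B$ commute with each other and $\rho_{AB}$ commutes with each of them,
\begin{align*}
\rho_{AB}(\rho_A\ot\rho_B)&=\rho_{AB}(\rho_A\ot\I_B)(\I_A\ot\rho_B)=(\rho_A\ot\I_B)\rho_{AB}(\I_A\ot\rho_B)\\
&=(\rho_A\ot\I_B)(\I_A\ot\rho_B)\rho_{AB}=(\rho_A\ot\rho_B)\rho_{AB},
\end{align*}
that is, $\Br{\rho_{AB},\rho_A\ot\rho_B}=0$.

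The main obstacle is the reduction $D(\rho_{AB})=0\Rightarrow D_A(\rho_{AB})=D_B(\rho_{AB})=0$ of the second paragraph; everything afterward is the one-line commutator manipulation above. A self-contained alternative would instead use the Theorem together with the first Corollary and the Remark to argue that each projector $\Pi_{A,\mu}$ commutes with $\rho_A$ and each $\Pi_{B,\nu}$ with $\rho_B$, so that $\sqrt{\rho_A}\Pi_{A,\mu}\sqrt{\rho_A}=\rho_A\Pi_{A,\mu}$ and likewise on $B$, and then substitute these into the structural form of $\rho_{AB}$ before expanding $\Br{\rho_{AB},\rho_A\ot\rho_B}$; but this route essentially re-derives the same two one-sided commutators, so the argument above is preferable.
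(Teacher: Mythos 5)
Your argument is correct, and it is worth noting that the paper itself offers no explicit proof of this Proposition --- it is asserted with ``Similarly, we have'', the intended analogy being with the Ferraro-style structural argument: a symmetric zero-discord state has (by the Theorem and the analogue of the Remark) a classical--classical form built from eigenprojectors of $\rho_A$ and $\rho_B$, from which the commutator $\Br{\rho_{AB},\rho_A\ot\rho_B}=0$ is read off directly. Your route is genuinely different and, in my view, cleaner: you first prove the monotonicity bounds $D(\rho_{AB})\geqslant D_A(\rho_{AB})$ and $D(\rho_{AB})\geqslant D_B(\rho_{AB})$ by composing the pinchings as $\Pi_A\ot\Pi_B=(\I_A\ot\Pi_B)\circ(\Pi_A\ot\I_B)$ and invoking Lemma~\ref{lem:Hiai} once (this step is correct: the inner relative entropy can only decrease, so the bracketed quantity can only increase, and the infima are ordered accordingly); then nonnegativity forces $D_A=D_B=0$, the cited one-sided implication of \cite{Ferraro} and its $A\leftrightarrow B$ mirror give the two one-sided commutators, and the elementary fact that an operator commuting with $\rho_A\ot\I_B$ and with $\I_A\ot\rho_B$ commutes with their product finishes the proof. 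What your approach buys is that it avoids re-deriving the structure of symmetric zero-discord states and isolates a reusable fact (symmetric discord dominates each one-sided discord) that the paper never states; what it costs is reliance on the external result of \cite{Ferraro}, whereas the structural route you sketch in your last paragraph would be self-contained given the Theorem and the Remark. Either way the Proposition follows.
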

Moreover,
\begin{prop}
$\Br{\rho_{AB},\rho_A\ot\rho_B}=0$ if and only if $\rho_{AB} =
\Pi_A\ot\Pi_B(\rho_{AB})$, where $\Pi_X = \Set{\Pi_{X,\alpha}}$ are
some PVM for which each projector $\Pi_{X,\alpha}$, where
$(X,\alpha)= (A,\mu),(B,\nu)$, are of any rank. That is,
$$
\rho_{AB} = \sum_{\mu,\nu} (\Pi_{A,\mu}\ot\Pi_{B,\nu}) \rho_{AB}
(\Pi_{A,\mu}\ot\Pi_{B,\nu}).
$$
\end{prop}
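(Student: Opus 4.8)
\emph{Plan.} I would derive this Proposition from the one-sided ``lazy state'' characterization already recalled from \cite{Cesar} --- $[\rho_{AB},\rho_A\otimes\I_B]=0$ iff $\rho_{AB}=\Pi_A(\rho_{AB})$ for the spectral PVM $\Pi_A=\{\Pi_{A,\mu}\}$ of $\rho_A$ --- by applying it once on each factor. Fix spectral decompositions $\rho_A=\sum_\mu a_\mu\Pi_{A,\mu}$ and $\rho_B=\sum_\nu b_\nu\Pi_{B,\nu}$ with the $a_\mu$, and the $b_\nu$, pairwise distinct and $\{\Pi_{A,\mu}\},\{\Pi_{B,\nu}\}$ the orthogonal spectral projectors (of any rank, since $\rho_A,\rho_B$ may be degenerate); these are the PVMs in the statement. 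Recall $\mathrm{supp}(\rho_{AB})\subseteq\mathrm{supp}(\rho_A)\otimes\mathrm{supp}(\rho_B)$, so all operators below act compatibly. The ``if'' direction is then immediate: $\rho_A\otimes\rho_B=\sum_{\mu,\nu}a_\mu b_\nu\,\Pi_{A,\mu}\otimes\Pi_{B,\nu}$ is a scalar multiple of the identity on the range of each $\Pi_{A,\mu}\otimes\Pi_{B,\nu}$, while the hypothesis $\rho_{AB}=\sum_{\mu,\nu}(\Pi_{A,\mu}\otimes\Pi_{B,\nu})\rho_{AB}(\Pi_{A,\mu}\otimes\Pi_{B,\nu})$ says $\rho_{AB}$ is block-diagonal for the orthogonal resolution $\{\Pi_{A,\mu}\otimes\Pi_{B,\nu}\}$, and a block-diagonal operator commutes with any operator that is scalar on each of those blocks.

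\emph{The ``only if'' direction.} Assume $[\rho_{AB},\rho_A\otimes\rho_B]=0$. Equivalently, $\rho_{AB}$ commutes with every spectral projector of $\rho_A\otimes\rho_B$, i.e.\ with every $E_\lambda=\sum_{(\mu,\nu):\,a_\mu b_\nu=\lambda}\Pi_{A,\mu}\otimes\Pi_{B,\nu}$, and (since $\log$ is injective on $(0,\infty)$, exactly as in the mutual-information-rate computation above) also with $\log\rho_A\otimes\I_B+\I_A\otimes\log\rho_B$. The step I would carry out next is to promote this single commutation relation to the two one-sided relations $[\rho_{AB},\rho_A\otimes\I_B]=0$ and $[\rho_{AB},\I_A\otimes\rho_B]=0$ separately. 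Granting that, \cite{Cesar} gives $\rho_{AB}=\Pi_A(\rho_{AB})$ and $\rho_{AB}=\Pi_B(\rho_{AB})$, so $\rho_{AB}$ commutes with each $\Pi_{A,\mu}\otimes\I_B$ and each $\I_A\otimes\Pi_{B,\nu}$, hence with each product $\Pi_{A,\mu}\otimes\Pi_{B,\nu}$; since $\sum_{\mu,\nu}\Pi_{A,\mu}\otimes\Pi_{B,\nu}=\I_{AB}$, this is precisely $\rho_{AB}=\sum_{\mu,\nu}(\Pi_{A,\mu}\otimes\Pi_{B,\nu})\rho_{AB}(\Pi_{A,\mu}\otimes\Pi_{B,\nu})$, as required.

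\emph{Main obstacle.} The promotion step is the crux, and where I expect the real work. For a generic Hermitian operator, commuting with $\rho_A\otimes\rho_B$ is strictly weaker than commuting with $\rho_A\otimes\I_B$ and with $\I_A\otimes\rho_B$ separately, the obstruction being accidental degeneracies $a_\mu b_\nu=a_{\mu'}b_{\nu'}$ with $(\mu,\nu)\neq(\mu',\nu')$ that merge several product blocks $\Pi_{A,\mu}\otimes\Pi_{B,\nu}$ into a single eigenspace $E_\lambda$ of $\rho_A\otimes\rho_B$. The only extra leverage available to defeat this is that $\rho_A,\rho_B$ are the \emph{marginals} of $\rho_{AB}$, namely $\rho_A=\Ptr{B}{\rho_{AB}}$ and $\rho_B=\Ptr{A}{\rho_{AB}}$; so the plan is to feed $[\rho_{AB},\rho_A^{\,n}\otimes\rho_B^{\,n}]=0$ for all $n\ge 0$ together with partial traces over $B$ (resp.\ over $A$) into this degenerate bookkeeping in order to isolate the $A$-block (resp.\ $B$-block) structure, in the spirit of the device used for the one-sided case in \cite{Cesar}. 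I would settle this point first, as the remainder of the argument is then routine.
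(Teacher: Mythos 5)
Your diagnosis is exactly right, and that is precisely the problem: the ``promotion step'' you flag as the crux is not a technical inconvenience to be worked around but a genuine obstruction, and neither your plan nor the paper's own proof overcomes it. The paper's argument simply asserts that $\Br{\rho_{AB},\rho_A\ot\rho_B}=0$ is \emph{equivalent} to $\Br{\rho_{AB},\Pi_{A,\mu}\ot\Pi_{B,\nu}}=0$ for all $\mu,\nu$ --- the same unproved claim in different clothing, valid only when the products $p_\mu q_\nu$ of the marginal eigenvalues are pairwise distinct. Your hope that the marginal conditions $\rho_A=\Ptr{B}{\rho_{AB}}$, $\rho_B=\Ptr{A}{\rho_{AB}}$ supply the missing leverage cannot be realized: powers give nothing new, since $\rho_A^{\,n}\ot\rho_B^{\,n}=(\rho_A\ot\rho_B)^n$ commutes with anything that commutes with $\rho_A\ot\rho_B$, and the following two-qubit example defeats any possible argument. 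In the product basis $\ket{00},\ket{01},\ket{10},\ket{11}$ take
$$
\rho_{AB}=\tfrac12\out{00}{00}+\tfrac16\Pa{\out{01}{01}+\out{10}{10}+\out{01}{10}+\out{10}{01}}+\tfrac16\out{11}{11},
$$
a legitimate state (eigenvalues $1/2,1/3,1/6,0$) with marginals $\rho_A=\rho_B=\mathrm{diag}(2/3,1/3)$. Then $\rho_A\ot\rho_B=\mathrm{diag}(4/9,2/9,2/9,1/9)$, and since the only off-diagonal entries of $\rho_{AB}$ sit inside the degenerate $2/9$-eigenspace spanned by $\ket{01},\ket{10}$, one has $\Br{\rho_{AB},\rho_A\ot\rho_B}=0$; yet the pinching by the spectral projectors of the marginals destroys the coherence $\out{01}{10}$, so $\rho_{AB}\neq\Pi_A\ot\Pi_B(\rho_{AB})$, and indeed $\Br{\rho_{AB},\rho_A\ot\I_B}\neq0$. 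So the ``only if'' direction of the Proposition is false as stated, and your plan cannot be completed.

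What survives: your ``if'' direction is correct ($\rho_{AB}$ block-diagonal with respect to $\Set{\Pi_{A,\mu}\ot\Pi_{B,\nu}}$ commutes with $\rho_A\ot\rho_B$, which is scalar on each block), and your reduction of the converse to the two one-sided statements of \cite{Cesar} would work \emph{if} the promotion step held. The honest repairs are either to add the genericity hypothesis that the products $p_\mu q_\nu$ are pairwise distinct, or to weaken the conclusion to block-diagonality with respect to the (generally non-product) spectral projectors of $\rho_A\ot\rho_B$.
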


\begin{proof}
Let the spectral decompositions of $\rho_{A,\tau}$ and
$\rho_{B,\tau}$ be
$$
\rho_{A,\tau} = \sum_\mu p_\mu \Pi_{A,\tau},\quad \rho_{B,\tau} =
\sum_\nu q_\nu \Pi_{B,\nu},
$$
respectively, where $\Set{\Pi_{A,\mu}}$ and $\Set{\Pi_{B,\nu}}$ are
the orthogonal projectors of any rank, such that $\set{p_\mu}$ and
$\set{q_\nu}$ are non-degenerate, respectively. Then
$\Set{\Pi_{A,\mu}\ot\Pi_{B,\nu}}$ are orthogonal eigen-projectors of
$\rho_{A,\tau}\ot\rho_{B,\tau}$. Since
$\Br{\rho_{AB},\rho_A\ot\rho_B}=0$ is equivalent to
$\Br{\rho_{AB},\Pi_{A,\mu}\ot\Pi_{B,\nu}}=0$ for all $\mu,\nu$, it
follows from $\sum_{\mu,\nu}\Pi_{A,\mu}\ot\Pi_{B,\nu} = \I_A\ot\I_B$
that
$$
\rho_{AB} = \sum_{\mu,\nu} (\Pi_{A,\mu}\ot\Pi_{B,\nu}) \rho_{AB}
(\Pi_{A,\mu}\ot\Pi_{B,\nu}).
$$
The converse follows from direct computation.
\end{proof}

Here the states $\rho_{AB}$ satisfying the condition
$\Br{\rho_{AB},\rho_A\ot\rho_B}=0$ are just the generalization of
zero-symmetric discord states and lazy states are the generalization
of zero discord states.

\section{Conclusion}
We have studied the well-known monotonicity inequality of relative
entropy under completely positive linear maps, by deriving some
properties of symmetric discord. A new form of zero-discord state
via Petz's monotonicity condition on relative entropy with equality
has been derived systematically. The results are generalized for the
zero-discord states.

There is a more interesting and challenging problem which can be
considered in the future study: What is a sufficient and necessary
condition for the vanishing conditional mutual entropy rates at a
time $\tau$:
$$
\Br{\frac{d}{dt}I(A:B|E)_\rho}_{t=\tau}=0,
$$
where $I(A:B|E)_\rho = \rS(\rho_{AE}) + \rS(\rho_{BE}) -
\rS(\rho_{ABE}) - \rS(\rho_E)$.

\subsection*{Acknowledgement}
We thank F. Brand\~{a}o, M. Mosonyi, M. Piani, J. Rau and A. Winter
for valuable comments. This project is supported by Natural Science
Foundations of China (11171301, 10771191 and 10471124) and Natural
Science Foundation of Zhejiang Province of China (Y6090105).


\end{document}